\documentclass[journal]{IEEEtran}
\usepackage[ruled,vlined,linesnumbered]{algorithm2e}
\usepackage{cite}
\usepackage{amsmath}
\usepackage{mathtools}
\usepackage{amssymb}
\usepackage{amsthm}
\usepackage{booktabs}
\usepackage{multirow}
\usepackage{graphicx}
\usepackage{subcaption}
\mathtoolsset{showonlyrefs}
\graphicspath{{figures/}{./figures/}}

\theoremstyle{plain}
\newtheorem{thm}{Theorem}

\newtheorem{prop}[thm]{Proposition}
\theoremstyle{definition}

\theoremstyle{remark}

\begin{document}

\title{RPG-VST: Robust Poisson–Gaussian Variance Stabilization for Blind RAW Denoising}

\author{Wenbin Wang, Xiaotong Luo, Yuan Gao*, Wenjun Zeng%
\thanks{W. Wang, Y. Gao and W. Zeng are with the OmniVision-IDT Joint Laboratory for Intelligent Image Sensing,
Ningbo Key Laboratory of Spatial Intelligence and Digital Derivative,
Ningbo Institute of Digital Twin, Eastern Institute of Technology, Ningbo;
Zhejiang Key Laboratory of Industrial Intelligence and Digital Twin.
X. Luo is with The Hong Kong Polytechnic University.
*\,Corresponding author: Y. Gao (e-mail: ygao@idt.eitech.edu.cn).}}

\markboth{IEEE Signal Processing Letters}{Robust PG Variance Stabilization for Blind RAW Denoising}
\maketitle

\begin{abstract}
Variance stabilization with the generalized Anscombe transform (GAT) enables frozen Gaussian denoisers to process Poisson--Gaussian (PG) RAW noise, but its reliability depends on fitted shot/read-noise parameters.
In blind single-image deployment, these parameters are estimated from low-texture RAW statistics that are often corrupted by residual texture, clipping, defective pixels, and read-noise floors.
Such contamination yields heavy-tailed log-variance residuals, making ordinary least-squares PG calibration brittle and causing severe tail failures despite favorable average PSNR.
We propose RPG-VST, a robust no-reference variance-stabilization framework for blind RAW denoising.
RPG-VST estimates PG parameters separately for each color filter array (CFA) plane using a Student-$t$ log-variance objective with robust tile statistics and physical constraints.
It then estimates the stabilized-domain noise level $\sigma_z$ from tile variance ratios and uses it as a reliability signal.
For each image, RPG-VST selects the robust fit or the conventional OLS fit according to which produces $\sigma_z$ closer to unit variance, requiring no clean reference or learned threshold.
On SID Sony SID$50$, SIDD, and ELD with frozen SwinIR and Restormer denoisers, RPG-VST improves mean PSNR in all six dataset--backbone settings.
It reduces severe tails, defined as cases whose PSNR gain over Direct is below $-1$\,dB, in four settings and leaves them unchanged in the other two.
On SIDD, it yields $+1.83/+1.92$\,dB and reduces severe tails from $44/36$ to $7/4$.
Ablations show that the $\sigma_z$ gate prevents regressions of ungated robust fitting on read-noise-dominated ELD captures.
\end{abstract}

\begin{IEEEkeywords}
Blind RAW denoising, Poisson--Gaussian noise, variance stabilization, generalized Anscombe transform, robust noise estimation, no-reference reliability.
\end{IEEEkeywords}

\begin{figure*}[!t]
\centering
\begin{subfigure}[t]{0.32\textwidth}
\centering
\includegraphics[width=\textwidth]{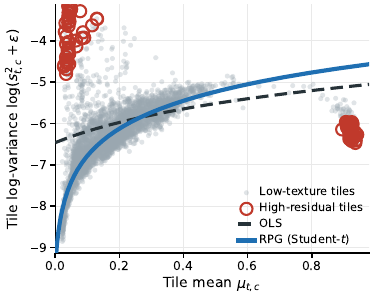}
\caption{Robust log-variance PG fit.}
\label{fig:motivation:scatter}
\end{subfigure}
\hfill
\begin{subfigure}[t]{0.32\textwidth}
\centering
\includegraphics[width=\textwidth]{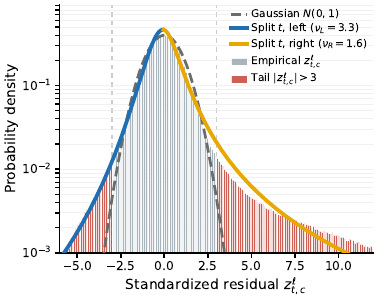}
\caption{Standardized-residual density.}
\label{fig:motivation:hist}
\end{subfigure}
\hfill
\begin{subfigure}[t]{0.32\textwidth}
\centering
\includegraphics[width=\textwidth]{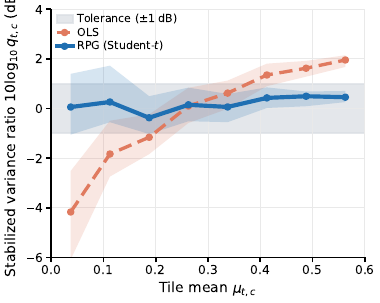}
\caption{Stabilized-domain variance ratio.}
\label{fig:motivation:tail}
\end{subfigure}
\caption{Heavy-tailed PG-fitting residuals motivate robust variance stabilization.
(\subref{fig:motivation:scatter})~Tile-level log-variance in a representative RAW crop.
High-residual tiles ($|z^{\ell}_{t,c}|>3$, red) deviate from the inlier mean--variance trend and bias the OLS fit, whereas the Student-$t$ fit follows the inliers.
(\subref{fig:motivation:hist})~Probability density of standardized residuals $z^\ell_{t,c}$ on a log scale.
The Gaussian reference underestimates both tails beyond $|z^\ell_{t,c}|>3$, while a continuous split Student-$t$ model with $\nu_L=3.3$ and $\nu_R=1.6$ matches the empirical tails, yielding $R_-(3)=9.4\times$ and $R_+(3)=28.9\times$.
(\subref{fig:motivation:tail})~Stabilized-domain variance ratio $q_{t,c}$ in dB on the same crop, shown by binned medians with $25$--$75\%$ bands.
The OLS ratio drifts outside the $\pm1$\,dB tolerance band with intensity, whereas the Student-$t$ fit remains close to the unit-variance target.}
\label{fig:motivation}
\end{figure*}

\section{Introduction}
\label{sec:intro}

\IEEEPARstart{M}{odern} image restoration increasingly relies on denoising priors trained under additive white Gaussian noise (AWGN)~\cite{buades2005review,fan2019brief,tian2020deep,zhao2026methods}.
CNN denoisers and plug-and-play priors such as DnCNN~\cite{zhang2017beyond}, FFDNet~\cite{zhang2018ffdnet}, and DRUNet~\cite{zhang2021plug} are widely used for Gaussian denoising and inverse restoration, while Transformer-based backbones such as SwinIR~\cite{liang2021swinir} and Restormer~\cite{zamir2022restormer} further improve high-resolution restoration.
These models are attractive as frozen denoisers because they can be reused without task-specific retraining.
RAW sensor measurements, however, are not governed by signal-independent AWGN~\cite{foi2008practical}.
Photon shot noise and electronic read noise produce signal-dependent Poisson--Gaussian (PG) observations, so applying a Gaussian denoiser to RAW data requires either RAW-specific training or a physical noise adaptation~\cite{brooks2019unprocessing}.

To address the gap between RAW noise and the AWGN assumption, existing real RAW denoising studies have mainly developed along data-driven denoising and camera noise modeling~\cite{foi2008practical,abdelhamed2019noise,wei2020physics,cao2023physics}.
On the data side, paired real RAW datasets such as SID and SIDD have promoted the development of low-light and smartphone RAW denoising~\cite{chen2018learning,abdelhamed2018high}.
On the modeling side, Unprocessing, CBDNet, Noise Flow, ELD, and ISO-dependent sensor noise models improve noise realism from complementary perspectives, including inverse camera pipelines, realistic noise synthesis, conditional noise distribution modeling, and low-light sensor physics~\cite{brooks2019unprocessing,guo2019cbdnet,abdelhamed2019noise,wei2020physics,cao2023physics}.
Related RAW restoration methods also operate directly on packed CFA measurements~\cite{chen2018learning,wei2020physics}.
These studies show that real camera noise is far more complex than AWGN or a single global noise level.
However, they typically rely on paired data, sensor calibration, camera-dependent synthesis, or learned noise models, and therefore do not directly address blind RAW denoising when the goal is to reuse a frozen Gaussian denoiser without retraining.
In this setting, a natural alternative is not to learn another RAW denoiser, but to use a physical transform that maps signal-dependent RAW noise to an approximately Gaussian and approximately constant-variance domain, where existing Gaussian denoisers remain applicable.
Noise2VST~\cite{herbreteau2025noise2vst} learns a model-free VST from each noisy image using a frozen Gaussian denoiser, while FBI-Denoiser~\cite{byun2021fbi} couples a learned PG-estimation network with a blind-spot denoiser.
In contrast, RPG-VST retains the analytic GAT and focuses on robust CFA-wise PG calibration with no-reference selection between robust and OLS estimates.

Following this variance-stabilization route, the Anscombe transform and its generalized form for PG noise map signal-dependent observations to an approximately constant-variance domain, where a Gaussian denoiser can be applied before an inverse transform returns the result to the original domain~\cite{anscombe1948transformation,foi2008practical,makitalo2013optimal}.
GAT therefore provides a physically interpretable adaptation path for processing RAW PG noise with frozen Gaussian denoisers.
The reliability of this path, however, depends not only on the denoiser itself, but also on the shot- and read-noise parameters used by GAT.
Prior work on PG fitting and mismatch analysis has shown that parameter errors directly break the unit-variance target of variance stabilization~\cite{foi2008practical,makitalo2014mismatch}.
This issue becomes especially critical in blind single-image deployment, where the PG parameters must be estimated from low-texture RAW tiles in the current image, while these tiles are often contaminated by residual texture, clipping, defective pixels, row or column artifacts, and read-noise floors~\cite{makitalo2014mismatch,abdelhamed2019noise,wei2020physics,cao2023physics}.
The resulting log-variance residuals are heavy-tailed, making an ordinary least-squares (OLS) PG fit vulnerable to a small number of unreliable tiles.
Heavy-tailed scale-mixture models limit the influence of contaminated statistics~\cite{huang2017mixed}; our Student-$t$ objective applies this principle specifically to PG-fitting residuals, not to the RAW noise distribution itself.
A biased PG fit breaks the unit-variance target of GAT and leaves residual heteroscedastic noise for the frozen denoiser.
Consequently, a GAT wrapper may improve average PSNR while still producing severe negative tails on individual images.

Fig.~\ref{fig:motivation} further illustrates this failure mode.
In a representative RAW region, most low-texture tiles follow a coherent mean--variance trend, whereas high-residual tiles pull the OLS fit away from the inlier trend.
The standardized log-variance residuals have much heavier tails than a Gaussian model, with tail probabilities beyond $|z|>3$ reaching $9.4\times$ and $28.9\times$ the Gaussian rate.
After variance stabilization, the OLS variance ratio varies with intensity and moves outside the $\pm1$\,dB tolerance band, while the Student-$t$ log-variance fit remains close to unit variance.
These observations indicate that the main risk in blind GAT is not merely whether GAT is used, but how the PG parameters required by GAT can be robustly estimated and validated without a reference image.

Based on this analysis, we propose RPG-VST, a robust PG-fitted variance-stabilization framework for blind RAW denoising with frozen Gaussian denoisers.
RPG-VST first estimates per-color-filter-array (CFA) PG parameters from robust tile statistics using a Student-$t$ loss on log-variance residuals, and then estimates the stabilized-domain noise level $\sigma_z$ from tile variance ratios.
The estimated $\sigma_z$ is used as a no-reference reliability signal to choose between the robust fit and the conventional OLS fit.
Since the OLS branch corresponds to Vanilla-GAT, the gate provides a conservative fallback when the robust fit is unreliable.
This work makes three contributions.
\begin{itemize}
    \item First, it analyzes how PG-parameter mismatch propagates through GAT and induces an intensity-dependent variance ratio in the stabilized domain.
    \item Second, it introduces a heavy-tailed log-variance PG estimator with robust tile statistics and physical constraints.
    \item Third, it introduces a no-reference $\sigma_z$ gate for robust or OLS selection in blind deployment. Experiments on SID Sony SID$50$~\cite{chen2018learning}, SIDD RAW blocks~\cite{abdelhamed2018high}, and ELD~\cite{wei2020physics} with SwinIR and Restormer show that RPG-VST improves mean PSNR in all six dataset--backbone settings and reduces severe-tail failures.
\end{itemize}

\section{Robust PG-Fitted Variance Stabilization}
\label{sec:method}

\subsection{PG RAW Model and Variance-Stabilized Denoising}

Let $y_i$ denote a normalized RAW pixel, and let $c(i)\in\mathcal C$
indicate its packed CFA plane, with
$\mathcal C=\{R,G_1,G_2,B\}$ in a Bayer RAW image.
We adopt the Poisson--Gaussian (PG) RAW noise model
\begin{equation}
y_i=x_i+n_i,\qquad
n_i\,|\,x_i\sim
\mathcal N\!\left(0,a_{c(i)}x_i+b_{c(i)}\right),
\label{eq:pg_model_rpg}
\end{equation}
where $a_c>0$ and $b_c\ge0$ denote the shot-noise and read-noise
parameters of CFA plane $c$.
For $\theta_c=(a_c,b_c)$, the generalized Anscombe transform (GAT) is
\begin{equation}
T_{\theta_c}(u)
=
\frac{2}{a_c}
\sqrt{a_cu+\frac{3}{8}a_c^2+b_c},
\label{eq:gat_rpg}
\end{equation}
with its exact unbiased or closed-form inverse denoted by
$T_{\theta_c}^{-1}$~\cite{foi2008practical,makitalo2013optimal}.
For a RAW image, $T_\theta$ is applied plane-wise according to $c(i)$.

Given a frozen Gaussian denoiser $D$, the VST denoising pipeline is
\begin{equation}
\hat x_{\theta,\sigma_z}
=
T_{\theta}^{-1}
\!\left(
D\!\left(T_{\theta}(y);\sigma_z\right)
\right),
\label{eq:rpg_vst_output}
\end{equation}
where $\sigma_z$ is the noise level supplied to the denoiser in the
stabilized domain.
Thus, blind RAW denoising with a frozen Gaussian denoiser reduces to
estimating the PG parameters $\theta$ and the stabilized-domain noise
scale $\sigma_z$ from the input image itself.

The effect of PG-parameter mismatch explains why this estimation step is
critical.
\begin{prop}[Effect of PG-parameter mismatch]
\label{prop:pg_mismatch}
Consider one CFA plane.
Suppose that the true local noise variance is $a^\star x_i+b^\star$,
whereas GAT is applied with parameters $(\bar a,\bar b)$.
To first order in the noise amplitude,
\begin{equation}
\operatorname{Var}
\!\left[
T_{\bar a,\bar b}(y_i)\mid x_i
\right]
\approx
\frac{a^\star x_i+b^\star}
{\bar a x_i+\bar b+\frac{3}{8}\bar a^2}.
\label{eq:mismatch_variance_ratio}
\end{equation}
\end{prop}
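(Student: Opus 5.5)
The plan is a first-order Taylor (delta-method) expansion of the mismatched transform around the clean signal $x_i$.

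Write $y_i=x_i+n_i$ with $\mathbb E[n_i\mid x_i]=0$ and $\operatorname{Var}[n_i\mid x_i]=a^\star x_i+b^\star$ from \eqref{eq:pg_model_rpg}. The transform $T_{\bar a,\bar b}$ in \eqref{eq:gat_rpg} is smooth wherever $\bar a u+\tfrac38\bar a^2+\bar b>0$, so linearizing at $u=x_i$ gives
\begin{equation}
T_{\bar a,\bar b}(y_i)=T_{\bar a,\bar b}(x_i)+T'_{\bar a,\bar b}(x_i)\,n_i+R_i ,
\end{equation}
where the remainder satisfies $R_i=O(n_i^2)$. Differentiating \eqref{eq:gat_rpg} gives
\begin{equation}
T'_{\bar a,\bar b}(u)=\frac{2}{\bar a}\cdot\frac{\bar a}{2\sqrt{\bar a u+\frac38\bar a^2+\bar b}}=\frac{1}{\sqrt{\bar a u+\frac38\bar a^2+\bar b}} .
\end{equation}

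Conditioning on $x_i$, the first term is deterministic. To first order in the noise amplitude, the variance therefore comes only from the linear term:
\begin{equation}
\operatorname{Var}[T_{\bar a,\bar b}(y_i)\mid x_i]\approx T'_{\bar a,\bar b}(x_i)^2\,(a^\star x_i+b^\star)=\frac{a^\star x_i+b^\star}{\bar a x_i+\bar b+\frac38\bar a^2}.
\end{equation}
This is exactly \eqref{eq:mismatch_variance_ratio}.

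As a consistency check, setting $(\bar a,\bar b)=(a^\star,b^\star)$ gives the ratio $1-\tfrac{3}{8}a^{\star2}/(a^\star x_i+b^\star+\tfrac38 a^{\star2})$. This approaches unit variance in the regime $a^\star x_i+b^\star\gg a^{\star2}$, which is the standard GAT behaviour.

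The only delicate step is justifying that the dropped terms are higher order. The quadratic remainder contributes covariance terms $\mathbb E[n_i R_i]=O(\mathbb E|n_i|^3)$ and a variance term $\operatorname{Var}(R_i)=O(\mathbb E n_i^4)$. Under the Gaussian conditional model both are $o(\operatorname{Var}[n_i])$ as the noise scale shrinks, with $x_i$ held away from the branch point $\bar a u+\tfrac38\bar a^2+\bar b=0$. I would state this explicitly as the meaning of ``to first order in the noise amplitude'' rather than attempt uniform bounds.
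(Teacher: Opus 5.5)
Your proposal is correct and is essentially the paper's proof: a first-order Taylor expansion of $T_{\bar a,\bar b}$ at $x_i$, the derivative $T'_{\bar a,\bar b}(x_i)=(\bar a x_i+\bar b+\tfrac38\bar a^2)^{-1/2}$, and the conditional variance of the linear term $T'_{\bar a,\bar b}(x_i)\,n_i$. Your consistency check and remainder discussion go beyond what the paper writes but do not change the argument. Strictly, bounding $R_i$ also needs $T_{\bar a,\bar b}$ extended past the branch point, since Gaussian tails reach it with exponentially small probability.
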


\begin{proof}
A first-order Taylor expansion of $T_{\bar a,\bar b}$ around $x_i$ gives
$T_{\bar a,\bar b}'(x_i)
=(\bar a x_i+\bar b+\tfrac38\bar a^2)^{-1/2}$.
Taking the conditional variance of
$T_{\bar a,\bar b}'(x_i)n_i$ gives
\eqref{eq:mismatch_variance_ratio}.
\end{proof}

Proposition~\ref{prop:pg_mismatch} shows that a mismatched PG fit induces
an intensity-dependent residual variance in the GAT domain, as illustrated
in Fig.~\ref{fig:motivation:tail}.
Tail failures should therefore be controlled at the PG-fitting stage.
The relevant quantity is the variance ratio implied by the fitted model,
which motivates the log-variance formulation below.

\subsection{Heavy-Tailed Log-Variance PG Fitting}

For each CFA plane $c$, we mask saturated, near-black, and defective pixels, partition the crop into $16\times16$ tiles, and form $\mathcal F_c$ from the flattest $55\%$ according to fixed gradient and high-pass scores.
These fitting settings are fixed a priori and shared across datasets and cameras.
For a tile $t\in\mathcal F_c$ with valid pixels $\Omega_{t,c}$, we compute
a robust local mean
$\mu_{t,c}=\operatorname{med}_{i\in\Omega_{t,c}}y_i$
and a robust local variance
\begin{equation}
s_{t,c}^2
=
\operatorname{WVar}_{\tau}
\{y_i-\hat p_{t,c}(i)\}_{i\in\Omega_{t,c}},
\label{eq:robust_tile_variance}
\end{equation}
where $\hat p_{t,c}$ is a Huber-fitted affine shading plane and
$\operatorname{WVar}_{\tau}$ denotes a Winsorized variance with $\tau=0.05$.
This tile statistic suppresses residual texture and isolated outliers while
preserving the local mean--variance relation needed for PG fitting.

We parameterize $a_c=\exp(\alpha_c)$ and $b_c=\exp(\beta_c)$, and write
$\vartheta_c=(\alpha_c,\beta_c)$.
The PG variance predicted for tile $t$ is
\begin{equation}
v_{t,c}(\vartheta_c)
=
\exp(\alpha_c)\mu_{t,c}+\exp(\beta_c).
\label{eq:tile_pg_variance}
\end{equation}
Instead of fitting $s_{t,c}^2$ to $v_{t,c}$ by least squares in the
variance domain, we use the log-variance residual
\begin{equation}
\ell_{t,c}(\vartheta_c)
=
\log(s_{t,c}^2+\varepsilon)
-
\log(v_{t,c}(\vartheta_c)+\varepsilon).
\label{eq:log_variance_residual}
\end{equation}
This residual measures multiplicative variance mismatch, which is the
quantity that directly appears after variance stabilization.

The robust PG estimate is obtained by
\begin{equation}
\hat\vartheta_c
=
\arg\min_{\vartheta_c\in\Theta_c}
\sum_{t\in\mathcal F_c}
w_{t,c}\,
\rho_\nu
\!\left(
\frac{\ell_{t,c}(\vartheta_c)}
{\hat s_{\ell,c}}
\right)
+
\lambda
\left\|
\vartheta_c-\vartheta_{0,c}
\right\|_{\Sigma_c^{-1}}^2,
\label{eq:rpg_objective}
\end{equation}
where
\begin{equation}
\rho_\nu(u)
=
\frac{\nu+1}{2}
\log\!\left(1+\frac{u^2}{\nu}\right)
\label{eq:student_t_loss}
\end{equation}
is the Student-$t$ negative log-likelihood with $\nu=3$, up to an additive constant.
Here, $w_{t,c}$ is a fixed tile weight determined by the number of valid
pixels and the flatness score, and $\hat s_{\ell,c}$ is a robust scale
estimate of the initial log-variance residuals.
The feasible set $\Theta_c$ imposes the physical bounds
\begin{equation}
\exp(\alpha_c)\in[10^{-8},2\!\times\!10^{-2}], \qquad \exp(\beta_c)\in[10^{-10},10^{-2}].
\label{eq:physical_bounds}
\end{equation}
The quadratic term in \eqref{eq:rpg_objective} is optional.
When camera- or ISO-level calibration priors are unavailable, we set
$\lambda=0$.
Otherwise, $(\vartheta_{0,c},\Sigma_c)$ provides a weak sensor prior.

We initialize \eqref{eq:rpg_objective} with a Theil--Sen-type robust
linear fit in the mean--variance domain, project the result onto
$\Theta_c$, and refine it using a few iteratively reweighted least-squares
steps.
The implicit IRLS tile weight is
\begin{equation}
\omega_{t,c}^{\mathrm{IRLS}}
=
w_{t,c}
\frac{\nu+1}
{\nu+
\left(
\ell_{t,c}(\vartheta_c)/\hat s_{\ell,c}
\right)^2}.
\label{eq:irls_weight}
\end{equation}
Large log-variance residuals therefore receive smaller weights, preventing
a few contaminated tiles from dominating the PG fit.
We denote the minimizer of \eqref{eq:rpg_objective} on a tile set
$\mathcal A_c$ by $\operatorname{RPGFit}(\mathcal A_c)$.

\subsection{Scale-Adaptive Robust GAT}

After fitting PG parameters, the stabilized noise level should be close to
unit variance.
In real RAW crops, residual sensor mismatch can still leave a global scale
error even when the intensity-dependent part is well calibrated.
For any fitted parameter set
$\theta=\{(a_c,b_c)\}_{c\in\mathcal C}$, we estimate this scale from the
tile-level stabilized variance ratio
\begin{equation}
q_{t,c}(\theta)
=
\frac{s_{t,c}^2+\varepsilon}
{a_c\mu_{t,c}+b_c+\frac{3}{8}a_c^2+\varepsilon}.
\label{eq:stabilized_ratio}
\end{equation}
The corresponding stabilized-domain noise level is
\begin{equation}
\hat\sigma_z(\theta)
=
\operatorname{clip}
\left(
\sqrt{
\operatorname{med}_{c\in\mathcal C,\ t\in\mathcal F_c}
q_{t,c}(\theta)
},
\sigma_{\min},
\sigma_{\max}
\right).
\label{eq:sigma_z}
\end{equation}
The robust branch uses
$\hat\theta=\{(\exp(\hat\alpha_c),\exp(\hat\beta_c))\}_{c\in\mathcal C}$
and $\hat\sigma_z=\hat\sigma_z(\hat\theta)$, producing
\begin{equation}
\hat x^{\mathrm{RPG}}
=
T_{\hat\theta}^{-1}
\!\left(
D\!\left(T_{\hat\theta}(y);\hat\sigma_z\right)
\right).
\label{eq:initial_rpg_vst}
\end{equation}

\subsection{No-Reference $\sigma_z$-Gated Robust/OLS Selection}
\label{sec:sigmaz_gate}

Robust fitting improves the PG estimate when the low-texture statistics
contain sparse contamination, but it can still be unreliable when the
statistics themselves are dominated by sensor floors or deep-shadow
artifacts.
For example, in read-noise-dominated high-ISO captures, the blind
log-variance objective may drive the read-noise estimate toward a bound and
compensate with an inflated shot-noise estimate.
In such cases, the global stabilized variance scale may depart from one, and the frozen denoiser receives a mis-scaled signal.

We use this observation as a no-reference reliability test.
Besides the robust estimate $\hat\theta$, we compute the conventional OLS
PG fit $\hat\theta^{\mathrm{OLS}}$ from the same tile statistics.
Their stabilized-domain noise levels are
$\hat\sigma_z=\hat\sigma_z(\hat\theta)$ and
$\hat\sigma_z^{\mathrm{OLS}}=\hat\sigma_z(\hat\theta^{\mathrm{OLS}})$.
Because a correctly calibrated GAT should yield unit-variance noise, $\lvert\log\hat\sigma_z\rvert$ is used as a pre-denoising scale-consistency score: a smaller value indicates closer agreement between the median tile-variance scale and the nominal unit-variance target, without using a clean reference, a reconstruction-quality metric, or a learned threshold. 
We select the fit with the smaller score:
\begin{equation}
(\theta^{\mathrm{sel}},\sigma_z^{\mathrm{sel}})
=
\begin{cases}
(\hat\theta,\hat\sigma_z),
& \lvert\log\hat\sigma_z\rvert
\le
\lvert\log\hat\sigma_z^{\mathrm{OLS}}\rvert,\\[2pt]
(\hat\theta^{\mathrm{OLS}},\hat\sigma_z^{\mathrm{OLS}}),
& \text{otherwise.}
\end{cases}
\label{eq:sigmaz_gate}
\end{equation}
The final hybrid reconstruction is
\begin{equation}
\hat x^{\mathrm{hyb}}
=
T_{\theta^{\mathrm{sel}}}^{-1}
\!\left(
D\!\left(
T_{\theta^{\mathrm{sel}}}(y);
\sigma_z^{\mathrm{sel}}
\right)
\right).
\label{eq:hybrid_reconstruction}
\end{equation}
Because the OLS branch coincides with Vanilla-GAT, the rule reverts to the conventional fit whenever its scale-consistency score is smaller.
The comparison uses only fitted PG parameters and tile statistics before denoising; it evaluates neither reconstruction and therefore requires only one frozen-denoiser call.

\begin{algorithm}[t]
\caption{RPG-VST: Robust Poisson-Gaussian Variance Stabilization for Blind RAW Denoising}
\label{alg:rpgvst}
\DontPrintSemicolon
\KwIn{RAW crop $y$, frozen Gaussian denoiser $D$}
\KwOut{Estimate $\hat x^{\mathrm{hyb}}$}
Build the valid-pixel mask and packed CFA planes
$\{\Omega_c\}_{c\in\mathcal C}$\;
Extract low-texture tile sets $\{\mathcal F_c\}$ and robust statistics
$\{(\mu_{t,c},s_{t,c}^2)\}$\;
\ForEach{$c\in\mathcal C$}{
    $\hat\vartheta_c\leftarrow\operatorname{RPGFit}(\mathcal F_c)$ via
    \eqref{eq:rpg_objective}\;
    $\hat\vartheta_c^{\mathrm{OLS}}\leftarrow$ OLS PG fit on
    $\mathcal F_c$\;
}
Form $\hat\theta$ and $\hat\theta^{\mathrm{OLS}}$\;
Compute $\hat\sigma_z(\hat\theta)$ and
$\hat\sigma_z(\hat\theta^{\mathrm{OLS}})$ using \eqref{eq:sigma_z}\;
Select $(\theta^{\mathrm{sel}},\sigma_z^{\mathrm{sel}})$ using
\eqref{eq:sigmaz_gate}\;
\Return $\hat x^{\mathrm{hyb}}$ from \eqref{eq:hybrid_reconstruction}\;
\end{algorithm}

\section{Experiments}
\label{sec:exp}

We evaluate frozen SwinIR-c$50$ and Restormer-$\sigma25$ on SID$50$ ($100$ inputs)~\cite{chen2018learning}, SIDD validation RAW ($320$ blocks)~\cite{abdelhamed2018high}, and ELD ($240$ RAW pairs from four cameras and $10$ scenes at ISO levels $\{800,1600,3200\}$ and low-light factors $\{100,200\}$)~\cite{wei2020physics}.
All methods use the same normalized packed-CFA inputs and frozen weights, and PG fitting is performed independently on each noisy input without a reference.
Direct applies the frozen denoiser without GAT or PG fitting and serves only as the evaluation reference.
Vanilla-GAT uses the same GAT/inverse and denoiser pipeline as RPG-VST but replaces the Student-$t$ PG fit with OLS on the same tiles; it is exactly the gate's OLS fallback.
RPG-VST selects between the robust and Vanilla-GAT fits using \eqref{eq:sigmaz_gate}.
PSNR is computed per input against its paired clean reference in the normalized packed-RAW domain and averaged within each dataset--backbone setting.
For method $m$, a severe tail is an input $j$ satisfying $\operatorname{PSNR}_{m}(j)-\operatorname{PSNR}_{\mathrm{Direct}}(j)<-1$\,dB under the same backbone.
Table~\ref{tab:main} reports mean PSNR and severe-tail count for Vanilla-GAT and the RPG-VST hybrid.
RPG-VST improves mean PSNR in all six dataset--backbone settings and never increases the severe-tail count.
It reduces severe tails in four settings, with the largest improvements on SIDD, where PSNR increases by $+1.83/+1.92$ dB and severe tails are reduced from $44/36$ to $7/4$ for SwinIR and Restormer, respectively.
Consistent improvements are also observed on SID$50$+SwinIR, with $+0.54$ dB and $13\!\to\!5$ tails, and on ELD+SwinIR, with $+0.19$ dB and $72\!\to\!58$ tails.
RPG-VST gains $+0.18$ dB with $4$ severe tails on SID$50$+Restormer, but only $+0.04$ dB with an unchanged tail count of $109$ on ELD+Restormer, indicating that read-noise-dominated deep shadows remain challenging.
The gate can fall back to OLS when the robust fit is unreliable, but cannot resolve cases in which both candidate fits are inadequate.

\begin{table}[!t]
\renewcommand{\arraystretch}{1.05}
\caption{RPG-VST evaluation against the Vanilla-GAT baseline. Each entry reports mean PSNR (dB) / severe-tail count, where a severe tail is a case with PSNR gain over Direct below $-1$ dB.}
\label{tab:main}
\centering
\footnotesize
\resizebox{\columnwidth}{!}{%
\begin{tabular}{@{}llcc@{}}
\toprule
Dataset & Backbone & Vanilla-GAT & RPG-VST \\
\midrule
SID$50$ & SwinIR-c50 & $41.89$ / $13$ & $\mathbf{42.43}$ / $\mathbf{5}$ \\
SID$50$ & Restormer-$\sigma25$ & $38.42$ / $\mathbf{4}$ & $\mathbf{38.60}$ / $\mathbf{4}$ \\
SIDD & SwinIR-c50 & $48.39$ / $44$ & $\mathbf{50.22}$ / $\mathbf{7}$ \\
SIDD & Restormer-$\sigma25$ & $44.90$ / $36$ & $\mathbf{46.82}$ / $\mathbf{4}$ \\
ELD & SwinIR-c50 & $26.26$ / $72$ & $\mathbf{26.45}$ / $\mathbf{58}$ \\
ELD & Restormer-$\sigma25$ & $21.19$ / $\mathbf{109}$ & $\mathbf{21.23}$ / $\mathbf{109}$ \\
\bottomrule
\end{tabular}
}
\end{table}

\begin{table}[!t]
\renewcommand{\arraystretch}{1.05}
\caption{Ablation of the $\sigma_z$ gate, pooled over SwinIR-c$50$ and Restormer-$\sigma25$ (mean PSNR, dB / severe-tail count).}
\label{tab:ablation}
\centering
\footnotesize
\resizebox{\columnwidth}{!}{%
\begin{tabular}{@{}lccc@{}}
\toprule
Method & SID$50$ & SIDD & ELD \\
\midrule
Vanilla-GAT & $40.15/17$ & $46.64/80$ & $23.72/181$ \\
RPG-VST without $\sigma_z$ gate & $40.41/13$ & $48.51/\mathbf{11}$ & $23.84/200$ \\
RPG-VST with $\sigma_z$ gate & $\mathbf{40.52}/\mathbf{9}$ & $\mathbf{48.52}/\mathbf{11}$ & $\mathbf{23.84}/\mathbf{167}$ \\
\bottomrule
\end{tabular}
}
\end{table}

Table~\ref{tab:ablation} isolates the effect of the $\sigma_z$ gate.
Ungated robust fitting improves SID$50$ and SIDD but increases ELD tails from $181$ to $200$ under read-noise and deep-shadow contamination.
The gate improves SID$50$ ($40.41/13\!\to\!40.52/9$), raises SIDD PSNR ($48.51\!\to\!48.52$) at $11$ tails, and reduces ELD tails ($200\!\to\!167$) at $23.84$ dB.
Thus, the full method outperforms Vanilla-GAT in average PSNR and summed tails on all three datasets.

\section{Conclusion}
\label{sec:conclusion}

RPG-VST addresses tail failures of GAT-wrapped frozen denoisers at the PG-fitting stage.
It combines a heavy-tailed Student-$t$ log-variance estimator with a no-reference $\sigma_z$ gate that selects, for each image, between the robust PG fit and the conventional OLS fit according to the stabilized-domain noise scale.
Across SwinIR and Restormer on SID$50$, SIDD, and ELD, RPG-VST improves mean PSNR in all six dataset--backbone settings, reduces severe-tail counts in four, and leaves them unchanged in the remaining two.
The ablation further shows that the $\sigma_z$ gate is essential for avoiding regressions of the ungated robust fit on read-noise-dominated ELD captures.
Future work will study cross-CFA parameter sharing, broader frozen backbones, and finer-grained tile-level reliability gating.

\clearpage
\bibliographystyle{IEEEtran}
\bibliography{references}

\end{document}